\documentclass[conference, 10 pt]{ieeeconf}  

\IEEEoverridecommandlockouts                              
\usepackage{float}
\usepackage{amsmath}

\usepackage{amssymb}

\usepackage{amsthm}
\usepackage{algorithm} 
\usepackage{algpseudocode} 
\usepackage{xcolor}
\usepackage{bbm}     
\usepackage{graphicx} 
\usepackage[normalem]{ulem}
\newtheorem{theorem}{Theorem}
\newtheorem{lemma}{Lemma}

\newtheorem{assumption}{Assumption} 
\newtheorem{remark}{Remark} 
 
\usepackage{dsfont}
\usepackage{booktabs} 
\usepackage[utf8]{inputenc}

\renewcommand{\baselinestretch}{.96}

\title{\LARGE \bf A Koopman Set-Membership Approach for Nonlinear Data-Driven Control with Stability Guarantees}

\author{Yifan Xie, Zuxun Xiong,  Julian Berberich, Antonis Papachristodoulou, Frank Allg\"{o}wer 
\thanks{F. Allg\"{o}wer is thankful that his work was funded by Deutsche Forschungsgemeinschaft (DFG, German Research
Foundation) under Germany’s Excellence Strategy - EXC 2075 - 390740016 and within grant AL 316/15-1-468094890.
A. Papachristodoulou was supported in part by UK's Engineering, Physical Sciences Research Council projects EP/Y014073/1 and in part by the EPSRC under project UKRI2108.
The authors thank the International Max Planck Research School
for Intelligent Systems (IMPRS-IS) for supporting Yifan Xie.
	}
\thanks{Y.~Xie, J.~Berberich, F. Allg\"{o}wer are with the Institute for Systems Theory and Automatic Control, University of Stuttgart, 70550 Stuttgart, Germany. Julian~berberich is also with the Center for Integrated Quantum Science and Technology (IQST), University of Stuttgart, 70550 Stuttgart, Germany.
{\tt\small  \{yifan.xie, julian.berberich,
        frank.allgower\}@ist.uni-stuttgart.de}}
\thanks{Z. Xiong and A. Papachristodoulou are with Department of Engineering Science, University of Oxford, Parks Road, Oxford, OX1 3PJ, U.K. {\tt\small \{zuxun.xiong, antonis\}@eng.ox.ac.uk }}
}
\begin{document}

\maketitle
\thispagestyle{empty}
\pagestyle{empty}

\begin{abstract}
This paper proposes a data-driven controller design method for unknown nonlinear systems based on a Koopman bilinear realization. 
Using Koopman operator theory, the nonlinear system can be represented as a bilinear discrete-time system with a residual error term. 
The residual error is proportionally bounded by the norm of the lifted state and input, while the system matrices of the bilinear model are unknown.
Assuming that bounds on the residual error are available, the unknown system matrices are characterized via a set-membership representation using the collected input–state data pairs of the nonlinear system. 
A data-driven controller design method is proposed to ensure stability for all bilinear systems within this set-membership description and for all admissible residual errors.
More specifically, we design a rational state-feedback controller that stabilizes the bilinear model with residual error and, consequently, the original nonlinear system, by solving a sum-of-squares (SOS) program. 
The effectiveness of the proposed approach is demonstrated through numerical examples.
\end{abstract}

\section{Introduction}\label{sec:1}
The design of stabilizing controllers for nonlinear systems is a challenging problem in control theory~\cite{isidori1985nonlinear}. 
Traditional controller design methods often rely on local linearization techniques, which require exact knowledge of the system dynamics, and stability cannot be guaranteed over a broad region. 
The development of convex optimization methods has re-opened the controller synthesis problem for nonlinear systems. 
Specifically, the sum-of-squares (SOSs) method transforms polynomial positivity conditions into convex optimization problems through tractable relaxations~\cite{parrilo2000structured}. 
Existing SOS-based control design methods generally fall into two categories: those relying on specific structural properties of the system to achieve convexification~\cite{prajna2004nonlinear}, and those utilizing iterative algorithms to optimize the controller and the Control Lyapunov Function iteratively~\cite{newton2025design}. 
An alternative paradigm for nonlinear control design uses Koopman operator theory~\cite{koopman1931hamiltonian}. 
By lifting the finite-dimensional nonlinear state space into an infinite-dimensional space, the Koopman operator approximates the nonlinear dynamics globally as a linear system. 
This transformation enables the direct application of well-established linear control methods. 
Recent studies indicate that Koopman linearization cannot adequately capture the complex coupling between states and inputs, even with an infinite-dimensional lifted state. 
Instead, Koopman bilinearization provides a significantly more accurate representation for nonlinear control-affine systems compared with Koopman linearization~\cite{strasser2026overview}. Crucially, the residual error introduced by a Koopman bilinear realization can be rigorously bounded~\cite{strasser2026safedmd,philipp2025error}, which paves the way for Koopman-based nonlinear controller designs that provide strict theoretical guarantees for stability~\cite{bold2024data, strasser2025koopman}.

These methods, however, all require first identifying a linear or bilinear model using data and then designing controllers based on the identified system dynamics. 
To address this, the direct data-driven control approach aims to synthesize controllers directly from collected data without an intermediate system identification. 
A branch of this field is based on Willems' fundamental lemma~\cite{willems2005note}, which characterizes system behaviour using persistently exciting data trajectories~\cite{de2019formulas}. 
While extending this behavioral framework to nonlinear dynamics is difficult, recent progress has been made using online local linear approximations for data-driven model predictive control (MPC)~\cite{berberich2022linear}. 
Other recent works explore this extension via the Koopman operator~\cite{shang2024willems,xiong2025data}.
Nevertheless, due to strong underlying assumptions, using these methods to rigorously handle Koopman approximation errors remains a major challenge. 
An alternative direct data-driven control framework called data informativity has been proposed using robust control theory~\cite{van2023quadratic,DBLSCT2025}. 
Under this framework, quadratic matrix inequalities are typically used to characterize a set of systems that explains the measured data based on assumptions on the noise such as energy or instantaneous noise bounds. 
Data-driven controller designs aim to find a controller that stabilizes all systems that are consistent with the data.
This framework was first proposed for linear time-invariant systems \cite{van2023quadratic}, and now has been extended to various kinds of nonlinear systems, such as linear parameter-varying systems~\cite{verhoek2024decoupling}, polynomial systems~\cite{martin2023inference,guo2021data}, and bilinear systems~\cite{xie2025bilinear}.
Various control objectives are considered including $H_2$ control, $H_\infty$ control \cite{DBLSCT2025}, and min-max MPC~\cite{xie2026data, nguyen2023lmirobust}.
However, data-driven controller design for more general nonlinear systems remains unexplored in this framework.

In this work, we propose a controller design method for unknown continuous-time nonlinear control-affine systems using only input-state data.
The nonlinear system can be represented by a bilinear model with a residual term based on Koopman operator theory.
The bound on the residual is proportional to the norm of the lifted states and inputs.
With this knowledge of residual bounds, we explicitly construct a consistent set of all admissible system matrices using the input-state data. 
We formulate an SOS program that directly synthesizes a rational state-feedback control law for all possible bilinear systems within the consistent set.
We show that the continuous-time nonlinear system can be exponentially stabilized by a sampled controller designed for the bilinear systems.
Numerical examples show that the  proposed controller design method is effective and illustrate the influence of some parameters on computational complexity and convergence rate.

The rest of this paper is organized as follows. 
In Section \ref{sec:2}, we introduce the considered nonlinear system and the problem setup. 
In Section \ref{sec:3}, we first present the data-driven characterization of the bilinear system.
Then, we propose the data-driven controller design method and prove exponential stability of the unknown nonlinear system.
We illustrate the effectiveness of the proposed method using numerical examples in Section \ref{sec:4} and draw conclusions in the last section. 

\textbf{Notation:} 
We use $\mathrm{diag}$ and $\otimes$ to denote block diagonal matrices and Kronecker product, respectively.  
For matrices $A$ and $B$ of compatible dimensions, we abbreviate $ABA^\top$ to $AB\begin{bmatrix}\star\end{bmatrix}^\top$.
The set of integers from $a$ to $b$ is denoted by $\mathbb{I}_{[a, b]}$.
For a symmetric matrix $P$, we write $P\succ 0$ or $P\succeq 0$ if $P$ is positive definite or positive semi-definite, respectively, where negative (semi-)definite matrices are defined analogously.
We denote by $\mathbb{R}[x]_{\alpha}$ the set of all polynomials in the variable $x\in\mathbb{R}^n$ with degree $\alpha$.
The set of all $m\times n$ matrices with elements in $\mathbb{R}[x]_{\alpha}$ is denoted by $\mathbb{R}[x]^{m\times n}_{\alpha}$.
A matrix $S\in\mathbb{R}[x]^{n\times n}_{2\alpha}$ is an SOS matrix in $x$ if it can be decomposed as $S=T^\top T$ for some $T\in\mathbb{R}[x]^{m\times n}_{\alpha}$ and we write it as $S\in \text{SOS}[x]^n_{2\alpha}$.
A matrix $S\in\mathbb{R}[x]^{n\times n}_{2\alpha}$ is strictly SOS if $(S-\epsilon I_n)\in \text{SOS}[x]^n_{2\alpha}$ for some $\epsilon>0$ and we write it as $S\in \text{SOS}_+[x]^n_{2\alpha}$.


\section{Problem Setup}\label{sec:2}

We consider an unknown nonlinear control-affine system
\begin{equation}\label{system_nonlinear}
    \dot{x}(t)=f(x(t))+\sum_{i=1}^m g_i(x(t))u_i(t),
\end{equation}
where $x(t)\in\mathbb{R}^n$ is the state and $u_i(t)$ is the $i$-th element of input vector $u(t)\in\mathbb{R}^m$.
We assume that $f$ and $g_i, \forall i\in\mathbb{I}_{[1, m]}$ are unknown, and $f(0)=0$, i.e., $(x, u)=(0, 0)$ is an equilibrium of the system.
Based on the Koopman operator theory and its finite-dimensional approximation, the following bilinear discrete-time dynamics can represent the nonlinear system \eqref{system_nonlinear} 
\begin{equation}\label{system_bilinear}
\Phi(x_{t+1})\!=\!A_s\Phi(x_t)\!+\!B_su_t\!+\!\tilde{B}_s(u_t\otimes \Phi(x_t))\!+\!r(x_t, u_t),
\end{equation}
where $x_t=x(t\Delta k)$ for some sampling time $\Delta k>0$ and $t\in\mathbb{N}$, $\Phi(x)=\begin{bmatrix}x^\top &\phi_{n+1}(x) &\ldots &\phi_N(x)\end{bmatrix}$ is the lifted state with $\phi_i:\mathbb{R}^n\rightarrow\mathbb{R}$ for all $i\in\mathbb{I}_{[n+1, N]}$, and 
$r(x, u)$ is the residual.
In this paper, we assume that the matrices $A_s, B_s, \tilde{B}_s$ are unknown and the following assumption on the residual $r(x, u)$ and the lifting function is satisfied.

\begin{assumption}\upshape\label{assumption1}
Given compact sets $\mathbb{X}\in\mathbb{R}^n$ and $\mathbb{U}\in\mathbb{R}^m$, the residual function $r:\mathbb{X}\times \mathbb{U}\rightarrow \mathbb{R}^N$ satisfies the following proportional bound
\begin{equation}\label{residual}
    \|r(x, u)\|\leq c_x\|\Phi(x)\|+c_u\|u\|
\end{equation}
for some known $c_x, c_u>0$ and for all $x\in\mathbb{X}$, $u\in\mathbb{U}$.
We denote the set of functions $r$ satisfying \eqref{residual} by $Re$.
Besides, we assume that the lifting function $\Phi:\mathbb{R}^n\rightarrow\mathbb{R}^N$ is known and satisfies $\Phi\in \mathcal{C}^1(\mathbb{R}^n, \mathbb{R}^N)$ with $\Phi(0)=0$.
\end{assumption} 

\begin{remark}\upshape
According to existing studies,  there is an approximation error of the bilinear model \eqref{system_bilinear} arising from the use of a finite-dimensional dictionary for the lifted state, and the data-based estimation.
This approximation error can be bounded proportionally to the norms of the lifted state and the input. Such a bound is crucial for the robust controller design of the original nonlinear system, as the control objective must ensure robustness against all admissible errors within this bound.
These types of proportional error bounds have been explicitly characterized in data-driven Koopman-based frameworks, such as SafeDMD~\cite{strasser2026safedmd}, and EDMD~\cite{bold2024data}.
For a comprehensive overview of Koopman-based control and the associated approximation errors, the interested reader is referred to \cite{strasser2026overview}.
In this paper, we assume that the residual satisfies a proportional bound of the form in Assumption~\ref{assumption1}, and that the corresponding coefficients can be obtained or estimated over the considered compact sets.
\end{remark}

We assume that a number of input and state pairs (data) generated from the system \eqref{system_nonlinear} are available:
\begin{equation}\label{data1}\nonumber
    \left \{(x_{t_i}^f, u_{t_i}^f, x_{t_i+\Delta k}^f)_{i=1}^{T_f}: x_{t_i}^f, x_{t_i+\Delta k}^f\in\mathbb{X}, u_{t_i}^f\in\mathbb{U} \right\}.
\end{equation}
Using the known lifting function, we obtain the following lifted state and input pairs for the bilinear system~\eqref{system_bilinear}
\begin{equation}\label{data2}
    \left \{\!(\Phi(x_{t_i}^f), u_{t_i}^f, \Phi(x_{t_i+\Delta k}^f)_{i=1}^{T_f}): x_{t_i}^f, x_{t_i+\Delta k}^f\in\mathbb{X}, u_{t_i}^f\in\mathbb{U}\!\right \}.
\end{equation}
The corresponding residual is unknown, but satisfies the bound in Assumption~\ref{assumption1}.
In this paper, we aim to directly design a controller using the data \eqref{data2} such that the resulting closed-loop system~\eqref{system_nonlinear} is stabilized to the origin.




\section{Koopman-based Data-driven Controller Design}\label{sec:3}

In Section~\ref{sec3.1}, we first propose a data-driven characterization of the set of consistent system matrices for the bilinear system~\eqref{system_bilinear}, using the data \eqref{data2} and Assumption~\ref{assumption1}.
Then, an SOS program is proposed to derive a rational state-feedback controller for all bilinear systems within the consistent set in Section~\ref{sec3.2}.
We prove that the resulting nonlinear system is exponentially stabilized based on the sampled rational state-feedback controller designed for the bilinear systems.

\subsection{Data-driven characterization}\label{sec3.1}
The set of $(A, B, \tilde{B})$ that is consistent with the data~\eqref{data2} and  Assumption~\ref{assumption1} is defined as
\[\Sigma=\left \{(A, B, \tilde{B}): (A, B, \tilde{B})\in\Sigma_i, \forall i\in\mathbb{I}_{[1, T_f]}\right \},\]
where $\Sigma_i$ is the set of system matrices $(A, B, \tilde{B})$ for which there exist a residual function satisfying Assumption \ref{assumption1} explaining the data $\Phi_{t_i}^f, u_{t_i}^f, \Phi_{t_i+\Delta k}^f$, defined by
\begin{equation}\nonumber
\Sigma_{i}\!=\!\!\left\{\!(A, \!B, \!\tilde{B})\!:\!
\begin{gathered}
r(x_{t_i}^f, u_{t_i}^f)\!=\!\Phi(x_{t_i+\Delta k}^f)\!-\!A\Phi(x_{t_i}^f)\!-\!Bu_{t_i}^f\!\\-\!\tilde{B}(u_{t_i}^f\otimes \Phi(x_{t_i}^f)) 
\text{ holds for some }r\in Re
\end{gathered}
\right\}.
\end{equation}

In the following lemma, we characterize the set of $(A, B, \tilde{B})$ consistent with the data \eqref{data2} by a quadratic matrix inequality.

\begin{lemma}\upshape
Suppose Assumption~1 holds. The set of system matrices consistent with the data~\eqref{data2} is equal to
\begin{equation}\label{Sigma}
\left\{\!\!(A, B, \tilde{B}):\!\!\!
\begin{gathered}
\begin{bmatrix}
I \!\!\!&A \!\!\!&B \!\!\!&\tilde{B}
\end{bmatrix}
\!\!M(\tau(z))\!\!
\begin{bmatrix}
I \!\!\!&A \!\!\!&B \!\!\!&\tilde{B}
\end{bmatrix}^\top\!\succeq\! 0,  \\
\forall \tau(z)\!=\!(\tau_1(z), \ldots, \tau_{T_f}(z)), i\in\mathbb{I}_{[1, T_f]}\\ \tau_i\in \text{SOS}[z]_{2\alpha}, z\in\mathbb{R}^N
\end{gathered}
\right\},
\end{equation}
where
$M(\tau(z))\!=\!\!\!\sum_{i=1}^{T_f}\tau_i(z)N_i$ and 
\[N_i\!=\!\!\!\begin{bmatrix}
I \!\!\!&\Phi(x_{t_{i}+\Delta k}^f)\\
0 \!\!\!&-\Phi(x_{t_i}^f)\\
0 \!\!\!&-u_{t_i}^f\\
0 \!\!\!&-(u_{t_i}^f \!\otimes\! \Phi(x_{t_i}^f))
\end{bmatrix}\!\!\!
\begin{bmatrix}
    (c_x\|\Phi(x_{t_i}^f)\|\!\!+\!c_u\|u_{t_i}^f\|)^2 \!\!\!\!\!&0\\
    0 \!\!\!\!\!&-I
\end{bmatrix}\!\!\!
\begin{bmatrix}
\star
\end{bmatrix}^\top.\]
\end{lemma}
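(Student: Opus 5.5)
We prove the set equality by two inclusions. The key step is to rewrite membership in each $\Sigma_i$ as a scalar quadratic inequality. Fix $i$ and abbreviate $\Phi_i=\Phi(x_{t_i}^f)$, $\Phi_i^+=\Phi(x_{t_i+\Delta k}^f)$, $u_i=u_{t_i}^f$, $\rho_i=c_x\|\Phi_i\|+c_u\|u_i\|$. Multiplying out the factored form of $N_i$ gives
\[
\begin{bmatrix} I & A & B & \tilde B\end{bmatrix} N_i \begin{bmatrix}\star\end{bmatrix}^\top
= \rho_i^2 I - r_i r_i^\top ,
\qquad r_i:=\Phi_i^+-A\Phi_i-Bu_i-\tilde B(u_i\otimes\Phi_i).
\]
This holds because the first block row of $\begin{bmatrix} I & A & B & \tilde B\end{bmatrix}$ times the left factor of $N_i$ is $\begin{bmatrix} I & r_i\end{bmatrix}$.

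Since $\rho_i^2 I - r_ir_i^\top$ has eigenvalues $\rho_i^2$ and $\rho_i^2-\|r_i\|^2$, it is positive semidefinite if and only if $\|r_i\|\le\rho_i$. We next show that this is exactly $(A,B,\tilde B)\in\Sigma_i$.

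\emph{Only if.} Membership in $\Sigma_i$ means $r(x_{t_i}^f,u_i)=r_i$ for some $r\in Re$. Since $x_{t_i}^f\in\mathbb{X}$ and $u_i\in\mathbb{U}$, the bound \eqref{residual} gives $\|r_i\|\le\rho_i$.

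\emph{If.} Suppose $\|r_i\|\le\rho_i$. We must build some $r\in Re$ with $r(x_{t_i}^f,u_i)=r_i$. Take $r$ equal to $r_i$ at the point $(x_{t_i}^f,u_i)$ and zero elsewhere. This function satisfies \eqref{residual}, since the right-hand side is nonnegative. The only possible subtlety is a repeated data point $(x_{t_i}^f,u_i)$; this does not matter, because $\Sigma_i$ is defined with a separate witness $r$ for each $i$. Hence $\Sigma_i=\{(A,B,\tilde B):[I\ A\ B\ \tilde B]N_i[\star]^\top\succeq0\}$.

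Now combine the data points. Write $Q_i(A,B,\tilde B)$ for $[I\ A\ B\ \tilde B]N_i[\star]^\top$, so that $[I\ A\ B\ \tilde B]M(\tau(z))[\star]^\top=\sum_i\tau_i(z)Q_i$.

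\emph{Inclusion $\Sigma\subseteq$ \eqref{Sigma}.} If $(A,B,\tilde B)\in\Sigma$, then every $Q_i\succeq0$. Each $\tau_i(z)\ge0$ because it is SOS, so the sum is $\succeq0$ for every admissible $\tau$ and every $z$.

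\emph{Inclusion \eqref{Sigma} $\subseteq\Sigma$.} For each $j$, choose the constant SOS multipliers $\tau_j\equiv1$ and $\tau_i\equiv0$ for $i\ne j$. Constants lie in $\text{SOS}[z]_{2\alpha}$, so this choice is admissible, and the condition in \eqref{Sigma} reduces to $Q_j\succeq0$. Therefore $(A,B,\tilde B)\in\Sigma_j$ for all $j$, i.e., $(A,B,\tilde B)\in\Sigma$.

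The main obstacle is the \emph{if} direction of the single-point step, i.e., reading $\Sigma_i$ correctly. Existence of an admissible $r$ must be understood pointwise at the data sample, not as a single global $r$ explaining all samples, and the remaining steps are routine algebra.
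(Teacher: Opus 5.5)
Your proof is correct and takes the same route as the paper. You first identify each $\Sigma_i$ with the single-sample inequality $[I\ A\ B\ \tilde B]N_i[\star]^\top\succeq 0$, i.e.\ $\|r_i\|\le c_x\|\Phi(x_{t_i}^f)\|+c_u\|u_{t_i}^f\|$, and then pass to the intersection using nonnegative multipliers. You also make explicit what the paper leaves to ``we can show'' and its citation: the pointwise witness $r$ for the reverse inclusion in $\Sigma_i$, the choice $\tau_j\equiv 1$, $\tau_i\equiv 0$ for the reverse inclusion in $\Sigma$, and the dimensionally consistent reading $\mathrm{diag}\big((c_x\|\Phi(x_{t_i}^f)\|+c_u\|u_{t_i}^f\|)^2 I_N,\,-1\big)$ of the middle factor of $N_i$.
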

\begin{proof}
Since the residual function $r$ satisfies Assumption~\ref{assumption1}, we have $\|r(x_{t_i}^f, u_{t_i}^f)\|\leq c_x\|\Phi(x_{t_i}^f)\|+c_u\|u_{t_i}^f\|$.
Replacing $r(x_{t_i}^f, u_{t_i}^f)$ by $\Phi(x_{{t_i}+\Delta k}^f)-A\Phi(x_{t_i}^f)-Bu_{t_i}^f-\tilde{B}(u_{t_i}^f\otimes \Phi(x_{t_i}^f))$ in the previous bound, we can show that $\Sigma_i$ is equal to
\begin{equation}\label{sigma_i}
\left\{\begin{gathered}
(A, B, \tilde{B}):
\begin{bmatrix}
I \!\!\!&A \!\!\!&B \!\!\!&\tilde{B}
\end{bmatrix}
N_i
\begin{bmatrix}
I \!\!\!&A \!\!\!&B \!\!\!&\tilde{B}
\end{bmatrix}^\top\!\succeq\! 0
\end{gathered}
\right\},
\end{equation}
Similar to \cite{bisoffi2021trade} and using the characterization of the set $\Sigma_i$ in \eqref{sigma_i}, the set $\Sigma$ is equal to \eqref{Sigma}.
\end{proof}

Using the input–state data~\eqref{data2}, we characterize a set of system matrices of the bilinear model that are consistent with the observed data. 
Since the bilinear system~\eqref{system_bilinear}, with a bounded residual term can represent the nonlinear system~\eqref{system_nonlinear}, we first aim to design a controller that stabilizes all admissible bilinear systems within this consistent set. 
This ensures that the true bilinear system~\eqref{system_bilinear} is also stabilized.

\begin{figure*}[htbp]
\begin{equation}\label{sos}
    \begin{bmatrix}
\begin{bmatrix} 
\begin{bmatrix}
H d(z)-g(z) I_N  &0\\
0 &0
\end{bmatrix}-M(\tau(z))&  &   \\
&  \frac{g(z)}{2c_x^2} I_N  &  \\
&   & \frac{g(z)}{2c_u^2} I_m  
\end{bmatrix}& 
\begin{bmatrix} 0 \\ Hd(z) \\ L(z) \\ L(z) \otimes z \\ Hd(z) \\ L(z) \end{bmatrix} &0\\
\begin{bmatrix} 0 & H d(z) & L^T(z) \!\!& (L(z) \otimes z)^T  & H d(z) \!\!& L^T(z) \end{bmatrix} 
& 
Hd(z) & Hd(z)\\
0 &Hd(z) &\rho d(z) I_N
\end{bmatrix} \in \text{SOS}[z]^{5N+2m+mN}_{2\alpha}
\end{equation}
\rule{\textwidth}{0.4pt} 
\vspace{-10pt}
\end{figure*}

\subsection{Data-driven Controller Design for Bilinear Systems}\label{sec3.2}

We start with the controller design for the bilinear system with all possible consistent system matrices.
According to \cite{vatani2014control}, an unstable discrete-time bilinear system can be stabilized by a rational state-feedback control law and neither a linear nor a polynomial state-feedback is able to stabilize the system.
Therefore, we propose a rational controller design method for the bilinear system with all possible system matrices $(A, B, \tilde{B})\in\Sigma$ and all possible residuals $r\in Re$ based on an SOS program in the following theorem.

\begin{theorem}\upshape\label{theorem1}
Suppose that Assumption~1 holds.
Given the data \eqref{data2}, if there exist $\alpha\in\mathbb{N}$, $H\in\mathbb{R}^{N\times N}$, $L\in\mathbb{R}[z]^{m\times N}_{2\alpha-1}$, $d\in \text{SOS}_+[z]_{2\alpha}$, $\tau_i\in\text{SOS}[z]^{T_f}_{2\alpha}, \forall 
i\in\mathbb{I}_{[1, T_f]}$, $g\in \text{SOS}_+[z]_{2\alpha}$, and $\rho>0$ such that the SOS condition \eqref{sos} holds with $z\in\mathbb{R}^N$, then the derived rational controller
\[u(\Phi(x))=\frac{1}{d(\Phi(x))}K(\Phi(x))\Phi(x)\]
with $K(\Phi(x))=L(\Phi(x))H^{-1}$ exponentially stabilizes the origin of the uncertain system $\Phi(x_{t+1})\!=\!A\Phi(x_t)\!+\!Bu_t\!+\!\tilde{B}(u_t\otimes \Phi(x_t))\!+\!r(x_t, u_t)$ for any $(A, B, \tilde{B})\in\Sigma$ and $r\in Re$.
\end{theorem}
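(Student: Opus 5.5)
We plan to prove exponential stability with the quadratic Lyapunov function $V(\Phi)=\Phi^\top P\Phi$, $P=H^{-1}$. To make this well defined, we first read off from \eqref{sos} that $H$ is symmetric positive definite. The $(2,2)$ block of \eqref{sos} is $Hd(z)$ with $d\in\text{SOS}_+$, so positivity of this diagonal block forces $H\succ 0$; symmetry is implicit in the block structure. Throughout we write $z=\Phi(x_t)$ and $u_t=d(z)^{-1}L(z)H^{-1}z$. We change variables $\xi=H^{-1}z$, so that $u_t=d(z)^{-1}L(z)\xi$, $u_t\otimes z=d(z)^{-1}(L(z)\otimes z)\xi$, and $z=H\xi$. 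The closed-loop lifted state is then
\[
\Phi(x_{t+1})=\frac{1}{d(z)}\bigl(AHd(z)+BL(z)+\tilde B(L(z)\otimes z)\bigr)\xi+r .
\]

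The key step is to turn \eqref{sos} into a robust decrease inequality by a Schur complement and an S-procedure argument. Since \eqref{sos} is SOS, it is pointwise positive semidefinite for every $z$. We take the Schur complement with respect to the $(3,3)$ block $\rho d(z)I_N$. This gives a matrix whose $(2,2)$ entry is $Hd(z)-\rho^{-1}Hd(z)\,H$, which encodes a decay margin. Next we multiply from the left by $[\,I\;A\;B\;\tilde B\;\;\mathrm{blocks}\,]$ and its transpose. On the data multiplier we use the lemma: for $(A,B,\tilde B)\in\Sigma$ we have $[I\,A\,B\,\tilde B]M(\tau(z))[\star]^\top\succeq0$ for all SOS $\tau$. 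Hence $-M(\tau(z))$ contributes a nonpositive term and can be dropped, which is the usual full-block S-procedure elimination of $(A,B,\tilde B)$.

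Taking the Schur complement of the remaining $(2,2)$ block $Hd(z)$ then yields a matrix inequality of the form
\[
\begin{bmatrix}Hd-gI & \star\\ \star & \tfrac{g}{2c_x^2}I\oplus\tfrac{g}{2c_u^2}I\end{bmatrix}-\Psi^\top (Hd)^{-1}\Psi\succeq \text{(margin)},
\]
where $\Psi$ stacks the closed-loop numerator $AHd+BL+\tilde B(L\otimes z)$ together with the extra columns $Hd$ and $L$ that multiply the residual. We test this inequality against the vector $(d\xi,\,w_x,\,w_u)$. The residual is split as $r=r_x+r_u$, with $\|r_x\|\le c_x\|z\|$ and $\|r_u\|\le c_u\|u\|$; this uses Assumption~\ref{assumption1}, which is valid because $x\in\mathbb X$ and $u\in\mathbb U$. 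The diagonal terms $g/(2c_x^2)$ and $g/(2c_u^2)$ then absorb the cross terms via Young's inequality $\|a+b\|^2\le 2\|a\|^2+2\|b\|^2$; this is where the factor $2$ comes from. The $-gI$ term compensates $c_x^2\|z\|^2$ and $c_u^2\|u\|^2$ after scaling by $d^2$.

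The outcome should be
\[
V(\Phi(x_{t+1}))\le (1-\rho^{-1}\lambda_{\min}(H))\,V(\Phi(x_t)),
\]
or more generally $V^+\le \gamma V$ with $\gamma<1$ coming from the $\rho$-block and the strict positivity of $d$ and $g$. This inequality holds uniformly over $\Sigma$ and $Re$, and it gives exponential stability of the origin of the lifted uncertain system. Convergence of $\Phi$ then gives convergence of $x$, because $x$ is the first block of $\Phi$. We need to make sure the trajectory stays in $\mathbb X$ so that Assumption~\ref{assumption1} remains applicable. We will handle this by restricting to a sublevel set of $V$ contained in the lifted image of $\mathbb X$, which is invariant by the decrease, and by noting that $u$ stays in $\mathbb U$ on that set by continuity.

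The main obstacle will be the bookkeeping of the Schur and S-procedure step: matching each block of \eqref{sos} to the correct term, and in particular showing that the $L\otimes z$ block correctly represents $\tilde B(u\otimes z)$ after scaling by $d$. Getting the residual quadratic form to close with the exact constants $2c_x^2$ and $2c_u^2$ is similarly delicate. I would first verify the inequality pointwise for fixed $z$, treating $d$, $g$, $L$ and $\tau$ as numbers and matrices, and only afterwards invoke SOS implies PSD.
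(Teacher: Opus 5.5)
Your overall route is sound and genuinely different from the paper's. The paper never substitutes $(A,B,\tilde B)$ into the matrix inequality. After two Schur complements it rewrites \eqref{sos} in the factored form \eqref{proof3}, which contains $S(\tau(z))$ built from $M(\tau(z))^{-1}$. It then applies the dualization lemma to get \eqref{lyapu} and evaluates this along $(z,r,w_{AB})$ to obtain the dissipation inequality \eqref{decay}. Finally it discharges the residual term via \eqref{rzu} and the data term via the dualized QMI \eqref{S(z)}, treating $w_{AB}=[A\ B\ \tilde B]z_{AB}$ as an unknown signal.

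You instead apply the congruence $\mathrm{diag}([I\ A\ B\ \tilde B],I_N,I_m,I_N)$ directly and discard $-[I\ A\ B\ \tilde B]M(\tau(z))[\star]^\top\preceq 0$. This uses only the easy direction of the lemma: consistency means each $\beta_i^2I-r_ir_i^\top\succeq 0$, and $\tau_i(z)\ge 0$. It also avoids the invertibility and inertia hypotheses that the dualization lemma places on $M(\tau(z))$. The paper does not verify these, and they fail, e.g., wherever all $\tau_i(z)$ vanish. The paper's route, in exchange, makes the full-block S-procedure structure explicit and shows how \eqref{sos} arises in the synthesis direction.

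Your sublevel-set argument for staying in $\mathbb X\times\mathbb U$ addresses a point the paper's proof passes over. Your rate $1-\lambda_{\min}(H)/\rho$ is the correct consequence of $\|z_+\|_P^2\le\|z\|_P^2-\rho^{-1}\|z\|^2$.

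The final step, as you describe it, does not go through. After the first Schur complement and the congruence, the $N\times N$ block is $Hd-\frac{d}{\rho}H^2$, not $Hd$. Complementing with respect to it gives the dual-form inequality, with terms of the type $A_{\mathrm{cl}}(\cdot)A_{\mathrm{cl}}^\top$. Testing that against $(d\xi,w_x,w_u)$ does not directly bound $V(z_+)$; you would have to dualize back.

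Write $A_{\mathrm{cl}}=A+B\frac{K}{d}+\tilde B(\frac{K}{d}\otimes z)$, so that $AHd+BL+\tilde B(L\otimes z)=d\,A_{\mathrm{cl}}H$. Complement instead with respect to $\mathrm{diag}(Hd-gI,\frac{g}{2c_x^2}I_N,\frac{g}{2c_u^2}I_m)$ and test with $\xi=H^{-1}z$. After dividing by $d$ this gives
\[
z^\top Pz-\tfrac{1}{\rho}\|z\|^2\ \ge\ (A_{\mathrm{cl}}z)^\top\big(H-\tfrac{g}{d}I\big)^{-1}A_{\mathrm{cl}}z+\tfrac{d}{g}\big(2c_x^2\|z\|^2+2c_u^2\|u\|^2\big).
\]
Since $(H-\epsilon I)^{-1}=P+P(\epsilon^{-1}I-P)^{-1}P$ for $\epsilon=g/d$, one has $\|A_{\mathrm{cl}}z+r\|_P^2\le (A_{\mathrm{cl}}z)^\top(H-\frac{g}{d}I)^{-1}A_{\mathrm{cl}}z+\frac{d}{g}\|r\|^2$. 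The bound $\|r\|^2\le 2c_x^2\|z\|^2+2c_u^2\|u\|^2$ then closes the argument.

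So the $-gI$ term is what absorbs the cross term $2(A_{\mathrm{cl}}z)^\top Pr$ and $r^\top Pr$. The $\frac{g}{2c_x^2}$ and $\frac{g}{2c_u^2}$ blocks pay for $\frac{d}{g}\|r\|^2$, which is the reverse of your attribution.

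Also, positivity of \eqref{sos} only yields $H\succeq 0$ and $H\preceq\rho I$. $H\succ 0$ comes from the invertibility of $H$ implicit in the statement. The complement above also needs $Hd-gI\succ 0$, or a generalized Schur complement; the paper uses this implicitly too.
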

\begin{proof}
For brevity, we write $(\Phi(x_{t+1}), \Phi(x_t), u_t, r(x_t, u_t))$ as $(z_+, z, u, r)$.
Substituting the rational control law $u = \frac{K(z)}{d(z)}z$ and using $\frac{K(z)}{d(z)}z\otimes z=(\frac{K(z)}{d(z)}\otimes z)z$, the unknown bilinear system dynamics $\Phi(x_{t+1})=A\Phi(x_t)+Bu_t+\tilde{B}(u_t\otimes \Phi(x_t))+r(x_t, u_t)$ with any consistent system matrices $(A, B, \tilde{B})\in\Sigma$ and any $r\in Re$ can be rewritten as
\begin{equation}\nonumber
z_+ = \left[ A + B\frac{K(z)}{d(z)} + \tilde{B} \left( \frac{K(z)}{d(z)} \otimes z \right) \right] z + r,
\end{equation}
which yields the structured form
\begin{equation}\nonumber
z_+ = \begin{bmatrix}A & B & \tilde{B}\end{bmatrix} \begin{bmatrix} I_N \\ \frac{K(z)}{d(z)} \\ \frac{K(z)}{d(z)} \otimes z \end{bmatrix} z + r
\end{equation}
Defining $z_{AB}=\begin{bmatrix} I_N \\ \frac{K(z)}{d(z)} \\ \frac{K(z)}{d(z)} \otimes z \end{bmatrix} z$ and $w_{AB} = \begin{bmatrix}A & B & \tilde{B}\end{bmatrix}z_{AB}$, we have
\begin{equation}\nonumber
z_+ = w_{AB} + r.
\end{equation}

Applying Schur complement to \eqref{sos} twice implies~\eqref{eq:proof1}. 
\begin{figure*}[!t]
\begin{equation}\label{eq:proof1}
\begin{bmatrix} 0 \\ Hd(z) \\ L(z) \\ L(z)\otimes z \\ Hd(z)  \\ L(z) \end{bmatrix}
    [Hd(z)-\frac{1}{\rho}H^2 d]^{-1}
    \begin{bmatrix} \star \end{bmatrix}^T 
    +
    \begin{bmatrix} 
\begin{bmatrix}
H d(z)\!-\!g(z) I_N  \!\!\!&0\\
0 \!\!\!&0
\end{bmatrix}\!\!-\!\!M(\tau(z))&  &   \\
&  \frac{g(z)}{2c_x^2} I_N  &  \\
&   & \frac{g(z)}{2c_u^2} I_m  
\end{bmatrix}\succeq 0
\end{equation}
\rule{\textwidth}{0.4pt}
\end{figure*}
Define $P=H^{-1}$ and $K(z)=L(z)H^{-1}$. Multiplying \eqref{eq:proof1} by $\frac{1}{d(z)}$,  
the resulting inequality is equivalent to~\eqref{eq:proof2}, 
\begin{figure*}[!t]
\begin{equation}\label{eq:proof2}
\begin{bmatrix} 
0 \\ 
I_N \\ 
\frac{K(z)}{d(z)} \\ 
\frac{K(z)}{d(z)}\otimes z \\ 
I_N \\ 
\frac{K(z)}{d(z)} 
\end{bmatrix}
(\frac{1}{\rho}I_N-P)^{-1}
\begin{bmatrix} \star \end{bmatrix}^T 
+ \begin{bmatrix} 
\begin{bmatrix}
P^{-1}-\frac{g(z)}{d(z)} I_N & 0\\
0 & 0
\end{bmatrix}
-\frac{1}{d(z)}M(\tau(z)) 
& & \\
& \frac{g(z)}{2c_x^2d(z)} I_N & \\
& & \frac{g(z)}{2c_u^2d(z)} I_m  
\end{bmatrix}
\succeq 0
\end{equation}
\hrulefill
\end{figure*}
which is further equivalent to
\begin{equation}\label{proof3}
   E\!\!
    \begin{bmatrix}
        (\frac{1}{\rho}I_N-P)^{-1} \hspace{-1.5em}& \hspace{-1.5em}& \hspace{-3.5em}&\\
        \hspace{-1.5em}& P^{-1} \hspace{-1.5em}& \hspace{-3.5em}& \\
        \hspace{-1.5em}& \hspace{-1.5em}&(S(\tau(z))d(z))^{-1} \hspace{-3.5em}&\\
        \hspace{-1.5em}& \hspace{-1.5em}& \hspace{-3.5em}&\frac{g(z)}{d(z)} \Pi_r^{-1}  
    \end{bmatrix}\!\!
    E^T \succeq 0,
\end{equation}
where $S(\tau(z)) = \begin{bmatrix} I & 0 \\ 0 & -I \end{bmatrix} M(\tau(z))^{-1} \begin{bmatrix} -I & 0 \\ 0 & I \end{bmatrix}$, $\Pi_r=\mathrm{diag}(-I_N, 2c_x^2I_N, 2c_u^2 I_m)$ and
\begin{equation}\nonumber
    E\!=\!\!\begin{bmatrix}
      0  \!\!\!\!&-I_N \!\!\!\!& I_N \!\!\!\!& 0 \!\!& I_N \!\!\!\!& 0 \!\!\!\!& 0 \\
      \begin{bmatrix}I_N\\ \frac{K(z)}{d(z)}\\ \frac{K(z)}{d(z)}\!\otimes\! z\end{bmatrix}  \!\!\!\!&0 \!\!\!\!& 0 \!\!\!\!& -I_{N+m+mN} \!\!\!\!& 0 \!\!\!\!& 0 \!\!\!\!& 0 \\
      I_N  \!\!\!\!&0 \!\!\!\!& 0 \!\!\!\!& 0 \!\!\!\!& 0 \!\!\!\!& -I_N \!\!\!\!& 0 \\
      \frac{K(z)}{d(z)}  \!\!\!\!&0 \!\!\!\!& 0 \!\!\!\!& 0 \!\!\!\!& 0 \!\!\!\!& 0 \!\!\!\!& -I_m 
    \end{bmatrix}.
\end{equation}
Using the Dualization lemma in \cite[Lemma 4.9]{scherer2000linear}, \eqref{proof3} is equivalent to 
\begin{equation}\label{lyapu}
    \tilde{E}
    \begin{bmatrix}
        \frac{1}{\rho}I_N-P\hspace{-1em}& \hspace{-1.5em}& \hspace{-1.5em}& 
        \\
        \hspace{-1.5em}&P \hspace{-1em}& \hspace{-1.5em}&\\
        \hspace{-1.5em}& \hspace{-1em}& S(\tau(z))d(z) \hspace{-1.5em}& \\
        \hspace{-1.5em}& \hspace{-1em}& \hspace{-1.5em}&\frac{d(z)}{g(z)}\Pi_r
    \end{bmatrix}
    \tilde{E}^\top \preceq 0,
\end{equation}
where
\[\tilde{E}=\begin{bmatrix} 
        I_N \!\!\!\!& 0 \!\!\!\!& 0 \!\!\!\!& \begin{bmatrix} I_N \\ \frac{K(z)}{d(z)} \\ \frac{K(z)}{d(z)} \otimes z \end{bmatrix}^T \!\!\!\!& 0 \!\!\!\!& I_N \!\!\!\!& \frac{K(z)^\top}{d(z)}  \\
        0 \!\!\!\!& I_N \!\!\!\!& 0 \!\!\!\!& 0 \!\!\!\!& I_N \!\!\!\!& 0 \!\!\!\!& 0 \\
        0 \!\!\!\!& I_N \!\!\!\!& I_N \!\!\!\!& 0\!\!\!\!& 0 \!\!\!\!& 0 \!\!\!\!& 0  
    \end{bmatrix}.\]
Left and right multiplying \eqref{lyapu} with $\begin{bmatrix}z^\top &r^\top &w_{AB}^\top\end{bmatrix}$ and its transpose, we obtain
\begin{equation}\label{lyapu2}\nonumber
    \begin{bmatrix}z\\ z_+ \\ w_{AB} \\ z_{AB} \\ r \\ z \\ u 
    \end{bmatrix}^\top\!\!\!\!\!\!
    \begin{bmatrix}
        \frac{1}{\rho}I_N-P \hspace{-1em}& \hspace{-1.5em}& \hspace{-1.5em}& 
        \\
        \hspace{-1.5em}&P \hspace{-1em}& \hspace{-1.5em}&\\
        \hspace{-1.5em}& \hspace{-1em}& S(\tau(z))d(z) \hspace{-1.5em}& \\
        \hspace{-1.5em}& \hspace{-1em}& \hspace{-1.5em}&\frac{d(z)}{g(z)}\Pi_r
    \end{bmatrix}\!\!
    \begin{bmatrix}z\\ z_+ \\ w_{AB} \\ z_{AB} \\ r \\ z \\ u 
    \end{bmatrix} \!\!\preceq \!0,
\end{equation}
which is further equivalent to
\begin{equation}\label{decay}
\begin{aligned}
    &- \|z\|_P^2\!+\frac{1}{\rho}\|z\|^2 \!+\! \|z_+\|_P^2 +\! \begin{bmatrix} w_{AB} \\ z_{AB} \end{bmatrix}^\top \!\!\!\!\!S(\tau(z)) d(z)\!\! \begin{bmatrix}w_{AB} \\ z_{AB} \end{bmatrix} \\
    &- \frac{d(z)}{g(z)}\|r\|^2 + \frac{2c_x^2 d(z)}{g(z)}\|z\|^2 + \frac{2c_u^2 d(z)}{g(z)}\|u\|^2
    \leq 0
\end{aligned}
\end{equation}
using $u=\frac{K(z)}{d(z)}z$.
Using the bound on the residual function in \eqref{residual}, we have
\begin{equation}\label{bound}\nonumber
\|r\|^2\leq (c_x\|z\|+c_u\|u\|)^2\leq 2c_x^2 \|z\|^2+2c_u^2 \|u\|^2,
\end{equation}
which further gives the following inequality using the multiplier $\frac{d(z)}{g(z)}> 0$
\begin{equation}\label{rzu}
    - \frac{d(z)}{g(z)}\|r\|^2 + \frac{2c_x^2 d(z)}{g(z)}\|z\|^2 + \frac{2c_u^2 d(z)}{g(z)}\|u\|^2\geq 0.
\end{equation}
Since all consistent system matrices $(A, B, \tilde{B})\in\Sigma$ satisfy $[I \ A \ B \ \tilde{B}] M(\tau(z)) [I \ A \ B \ \tilde{B}]^T \preceq 0$, by applying the Dualization Lemma, we obtain:
\begin{equation}\label{S(z)}
\begin{bmatrix} [A \ B \ \tilde{B}] \\ I \end{bmatrix}^T S(\tau(z)) \begin{bmatrix} [A \ B \ \tilde{B}] \\ I \end{bmatrix} \preceq 0.
\end{equation}
Multiplying \eqref{S(z)} from left and right by $z_{AB}^\top\sqrt{d(z)}$ and its transpose, respectively, using $w_{AB}=\begin{bmatrix}A &B &\tilde{B}\end{bmatrix}z_{AB}$, we have
\begin{equation}\label{wz}
    \begin{bmatrix}
        w_{AB}\\
        z_{AB}
    \end{bmatrix}^\top
    S(\tau(z))d(z)
    \begin{bmatrix}
        w_{AB}\\
        z_{AB}
    \end{bmatrix}\geq 0.
\end{equation}
Using \eqref{rzu} and \eqref{wz} in \eqref{decay}, we obtain
\begin{equation}\label{Lyapu-QR}
    \|z_+\|_P^2 - \|z\|_P^2 \le -\frac{1}{\rho} \|P\|_2^2\|z\|^2
\end{equation}
which holds for $z_+=Az+Bu+\tilde{B}(u\otimes z)+r$ with any $(A, B, \tilde{B})\in\Sigma$ and $r\in Re$.
Thus, the Lyapunov inequality \eqref{Lyapu-QR} implies that the uncertain system $\Phi(x_{t+1})\!=\!A\Phi(x_t)\!+\!Bu_t\!+\!\tilde{B}(u_t\otimes \Phi(x_t))\!+\!r(x_t, u_t)$ with any $(A, B, \tilde{B})\in\Sigma$ and $r\in Re$ is exponentially stabilized to the origin by the derived rational controller.
\end{proof}

\begin{remark}\upshape\label{remark_obj}
Theorem~\ref{theorem1} shows that, by solving the SOS program~\eqref{sos} using the collected data, a rational state-feedback controller can be derived that exponentially stabilizes the bilinear system for all consistent system matrices and all residual errors satisfying Assumption~\ref{assumption1}. 
This result is proved via the Lyapunov decay inequality~\eqref{Lyapu-QR}, which holds with decay rate $\frac{1}{\rho}$.
The decay rate can be further optimized by incorporating an objective function that minimizes $\rho$, which means maximizing $\frac{1}{\rho}$. 
A larger value of $\frac{1}{\rho}$ leads to faster exponential convergence of the system to the origin. 
Alternatively, $\rho$ can be fixed to achieve a desired convergence rate.
\end{remark}

\begin{remark}\upshape
The SOS program \eqref{sos} is linear in the decision variables $H, L, \tau, g$ and $\rho$ when the denominator $d$ is fixed.
Thus, it can be efficiently solved using convex optimization techniques for a given choice of $d$.
The degree $\alpha$ of the rational controller also plays an important role.
In particular, choosing a higher degree $\alpha$ provides greater flexibility in the controller design but increases the computational complexity.
If the SOS program \eqref{sos} is not feasible with the chosen denominator $d$, a possible strategy is to consider an alternative denominator $d$, increase the degree $\alpha$, or use a larger dataset.
\end{remark}

\subsection{Koopman-based Data-driven Controller Design for Nonlinear Systems}\label{sec3.3}

The previous results establish a controller design method for uncertain bilinear systems. 
We now extend this result to the continuous-time nonlinear systems. 
This extension is based on the result that the bilinear system~\eqref{system_bilinear} with residual $r$ satisfying Assumption~\ref{assumption1} can represent the nonlinear system~\eqref{system_nonlinear} over the regions $\mathbb{X}$ and $\mathbb{U}$.

In the following theorem, we develop a data-driven controller design method for the continuous-time nonlinear system~\eqref{system_nonlinear}.

\begin{theorem}\label{theorem2}\upshape
Suppose that Assumption~1 holds.
Given the data \eqref{data2}, if there exist $\alpha\in\mathbb{N}$, $H\in\mathbb{R}^{N\times N}$, $L\in\mathbb{R}[z]^{m\times N}_{2\alpha-1}$, $d\in \text{SOS}_+[z]_{2\alpha}$, $\tau_i\in\text{SOS}[z]^{T_f}_{2\alpha}, \forall 
i\in\mathbb{I}_{[1, T_f]}$, $g\in \text{SOS}_+[z]_{2\alpha}$, and $\rho>0$ such that the SOS program \eqref{sos} holds with $z\in\mathbb{R}^N$, then the sampled controller
\begin{equation}\nonumber
\begin{aligned}
u_s(x(k))=u(x(t\Delta k)), k\in[t\Delta k, (t+1)\Delta k], t\in\mathbb{N} 
\end{aligned}
\end{equation}
with $u(x)=\frac{1}{d(\Phi(x))}L(\Phi(x))H^{-1}\Phi(x)$ exponentially stabilizes the origin of the continuous-time nonlinear system~\eqref{system_nonlinear}.
\end{theorem}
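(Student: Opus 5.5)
The plan is to reduce Theorem~\ref{theorem2} to Theorem~\ref{theorem1}. We first establish exponential decay of the sampled sequence $\Phi(x_t)$, and then control the continuous-time trajectory between sampling instants.

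\textbf{Step 1 (discrete-time decay).} Under the sampled controller $u_s$, the input is constant on $[t\Delta k,(t+1)\Delta k]$. Over this interval, system \eqref{system_nonlinear} therefore maps $x_t$ to $x_{t+1}$ exactly as the bilinear representation \eqref{system_bilinear} does, with the true matrices $(A_s,B_s,\tilde B_s)$ and a residual satisfying Assumption~\ref{assumption1}, provided $x_t\in\mathbb{X}$ and $u_t\in\mathbb{U}$. The true matrices are consistent with the data, so $(A_s,B_s,\tilde B_s)\in\Sigma$. Theorem~\ref{theorem1} then gives \eqref{Lyapu-QR} along the sampled trajectory:
\[
\|\Phi(x_{t+1})\|_P^2\le \|\Phi(x_t)\|_P^2-\tfrac{1}{\rho}\|P\|_2^2\|\Phi(x_t)\|^2\le \gamma\|\Phi(x_t)\|_P^2
\]
for some $\gamma\in[0,1)$. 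Hence $\|\Phi(x_t)\|\le c\,\gamma^{t/2}\|\Phi(x_0)\|$. Since the first $n$ components of $\Phi(x)$ equal $x$, we get $\|x_t\|\le\|\Phi(x_t)\|$. In the other direction, $\Phi\in\mathcal C^1$ with $\Phi(0)=0$ gives $\|\Phi(x)\|\le L_\Phi\|x\|$ on the compact set $\mathbb{X}$. Together these yield $\|x_t\|\le cL_\Phi\gamma^{t/2}\|x_0\|$.

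\textbf{Step 2 (invariance).} This is the main obstacle, because Assumption~\ref{assumption1} only holds on $\mathbb{X}\times\mathbb{U}$. I would restrict initial conditions to a sublevel set $\{x:\|\Phi(x)\|_P^2\le c_0\}$ contained in $\mathbb{X}$, with $c_0$ small enough that the inter-sample excursion (bounded in Step 3) also stays in $\mathbb{X}$. The sampled states remain in this set by the decay in Step 1. I also need $u(x)\in\mathbb{U}$, which holds if $\mathbb{U}$ contains a neighborhood of $0$, since $d\in\mathrm{SOS}_+$ makes $u$ continuous with $u(0)=0$ and $\|u(x)\|\le L_u\|\Phi(x)\|$ on the compact set. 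The stability claim is therefore local, and I would state it on this region of attraction.

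\textbf{Step 3 (inter-sample bound).} For $k\in[t\Delta k,(t+1)\Delta k]$, I would use local Lipschitz continuity of $f$ and $g_i$ on $\mathbb{X}$ (assumed; $f(0)=0$) to get
\[
\|\dot x\|\le L_f\|x\|+L_g\|u_t\|\le L_f\|x\|+L_gL_uL_\Phi\|x_t\|.
\]
Grönwall's inequality then gives $\|x(k)\|\le e^{L_f\Delta k}(1+L_gL_uL_\Phi\Delta k)\|x_t\|=:\beta\|x_t\|$. Combining with Step 1 yields $\|x(k)\|\le \beta cL_\Phi\gamma^{-1/2}\,(\gamma^{1/(2\Delta k)})^{k}\|x_0\|$, which is exponential stability of the origin of \eqref{system_nonlinear} in continuous time. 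Choosing $c_0$ so that $\beta$ times the sublevel radius remains inside $\mathbb{X}$ closes the invariance argument of Step 2.
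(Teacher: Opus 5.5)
Your proposal is correct and follows the paper's proof. Theorem~\ref{theorem1} together with $(A_s,B_s,\tilde B_s)\in\Sigma$ gives the decay \eqref{Lyapu-QR} along the sampled trajectory, and the passage to continuous-time exponential stability is the step the paper simply delegates to \cite[Corollary 4.2]{strasser2026safedmd}, which your Steps 2--3 carry out by hand via sublevel-set invariance and a Gr\"onwall inter-sample bound (closed by a first-exit-time argument). This costs the extra assumptions you flag (local Lipschitz continuity of $f,g_i$ and $0$ lying in the interiors of $\mathbb{X}$ and $\mathbb{U}$), but it makes explicit that the guarantee is local, consistent with the paper's subsequent remark.
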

\begin{proof}
As shown in Theorem~\ref{theorem1}, \eqref{sos} implies that \eqref{Lyapu-QR} holds for any $(A, B, \tilde{B})\in\Sigma$ and $r\in Re$.
Since $(A_s, B_s, \tilde{B}_s)\in\Sigma$, the Lyapunov decay also holds for the bilinear system~\eqref{system_bilinear}.
Then, we refer to \cite[Corollary 4.2]{strasser2026safedmd} to show that the closed-loop continuous-time nonlinear system~\eqref{system_nonlinear} is exponentially stabilized based on the sampled rational state-feedback control law.
\end{proof}

\begin{remark}\upshape
Theorem~\ref{theorem2} shows that the continuous-time nonlinear system~\eqref{system_nonlinear} is exponentially stabilized by the sampled controller designed for the bilinear model~\eqref{system_bilinear}. 
The controller is obtained by solving the SOS program~\eqref{sos} with a fixed denominator $d$ using the data~\eqref{data2}.
In general, Assumption~\ref{assumption1} can only be satisfied over suitable sets $\mathbb{X}$ and $\mathbb{U}$ with a certain probability. Consequently, closed-loop stability of the nonlinear system can only be guaranteed probabilistically for initial states that lie inside a subset of $\mathbb{X}$.
\end{remark}

\begin{remark}\upshape 
We previously proposed a data-driven min–max MPC scheme for bilinear systems
\cite{xie2025bilinear}. 
In contrast, this paper develops a controller design for nonlinear systems based on a bilinear model derived via Koopman operator theory.
The resulting bilinear model includes a residual term that is proportionally bounded by the lifting function and the input, whereas in \cite{xie2025bilinear} the bilinear model is corrupted by noise bounded by a constant. 
These differences lead to different data-driven characterizations of the system matrices and stability proofs.
Furthermore, the derived bilinear model in this work is formulated in terms of a lifted state, while \cite{xie2025bilinear} considers only the standard state. 
Compared with the Koopman-based controller design for nonlinear systems in \cite{strasser2025koopman}, which first identifies a bilinear model via EDMD and then designs a controller for the identified system, the present approach directly designs the controller using data collected from the nonlinear system.
\end{remark}

\section{Numerical Examples}\label{sec:4}
In this section, we demonstrate the effectiveness of the proposed data-driven controller design method for a nonlinear system based on its Koopman bilinearization.
The simulations are conducted in MATLAB using the SOSTOOLS \cite{sostools} and the MOSEK solver \cite{apsmosek} for SOS programs.

We consider a nonlinear system with the following continuous-time dynamics
\begin{equation}\label{sys_simul}
\dot{x}=x+(1+\sin(x))u.
\end{equation}
The sampling time is chosen as $\Delta k=0.01s$.
To facilitate the Koopman bilinearization, we choose the lifting function as $\Phi(x)=\begin{bmatrix}x &\sin(x)\end{bmatrix}^\top$.
The nonlinear system \eqref{sys_simul} can be represented by \eqref{system_bilinear} with unknown system matrices $A_s, B_s, \tilde{B}_s$.
Over the compact state region $\mathbb{X}=[-\pi/2, \pi/2]$ and input region $\mathbb{U}=[-2.5, 2.5]$, the residual error bounds in Assumption~\ref{assumption1} are determined as $c_x = 1.7\times 10^{-4}$ and $c_u=0.7\times 10^{-2}$, which are known.
To construct the consistent set $\Sigma$ defined in \eqref{Sigma}, we collect $T_f = 200$ input-state pairs. 
The initial states are sampled uniformly from $\mathbb{X}$, and the inputs are drawn randomly from $\mathbb{U}$.

We implement the proposed data-driven control scheme on the nonlinear system~\eqref{sys_simul}.
The initial state is chosen as $x_0=1$.
For the rational state-feedback controller, we choose the degree $\alpha=1$.
For the controller design, we fix the denominator by including all monomials in $\mathbb{R}[\Phi(x)]_2$, i.e., $d=1+x^2+x\sin(x)+\sin(x)^2$, which is strictly SOS.
Fig.~\ref{fig:1D} illustrates the closed-loop state and input trajectories of the resulting nonlinear system.
The results show that the designed rational state-feedback controller stabilizes the unknown nonlinear system, with both state and input trajectories converging exponentially to the origin.

\begin{figure}
  \centering
  \includegraphics[width=0.95\linewidth]{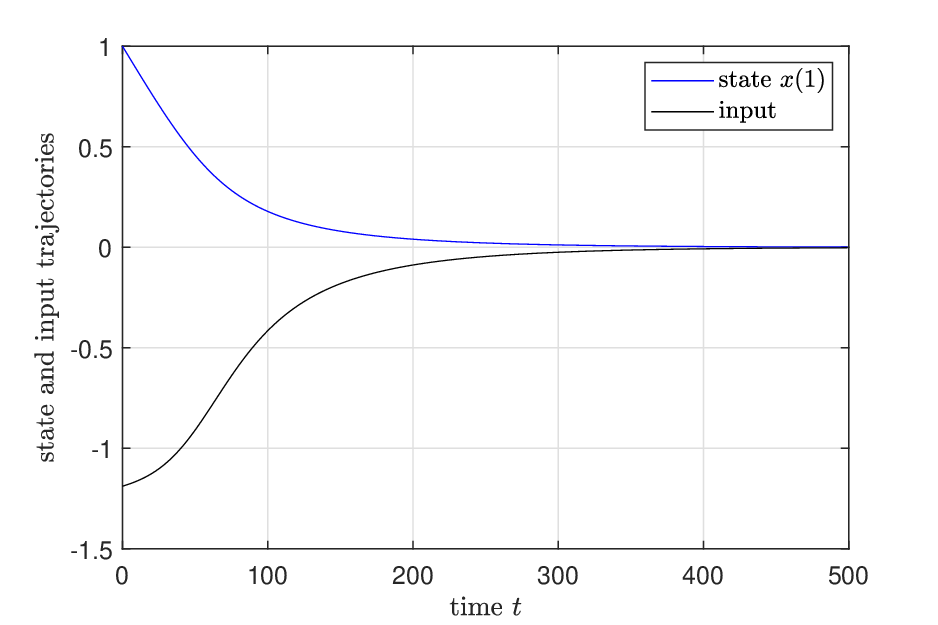}
  \caption{Closed-loop state and input trajectories of the nonlinear system~\eqref{sys_simul}.}\label{fig:1D}
\end{figure}

To further investigate the performance of the controller, we add an objective function to maximize the Lyapunov decay rate, i.e., $\max \frac{1}{\rho}$ (or equivalently $\min \rho$), as discussed in Remark~\ref{remark_obj}.
To compare the influence of the degree of the controller, we choose two different degrees, i.e., $\alpha=1$ or $\alpha=2$.
For each degree, the denominator $d$ is fixed by including all monomials in $\mathbb{R}[\Phi(x)]_{2\alpha}$.
All other parameters remain the same as the previous simulation.
Fig.~\ref{fig:1Dcompare} illustrates the closed-loop state and input trajectories when choosing different $\alpha$ and adding the objective function.
The result shows that the trajectories in Fig.~\ref{fig:1Dcompare} converge to the origin much faster than that in Fig.~\ref{fig:1D}.
Maximizing the decay rate significantly accelerates the convergence speed of the trajectories.
Furthermore, increasing the degree to $\alpha=2$ provides additional degrees of freedom in the SOS program, resulting in faster convergence of the trajectories. 
However, it also leads to a higher computational complexity: when $\alpha=2$ the computational time is $39.85$ seconds, while when $\alpha=1$, the computational time is $15.24$ seconds.
This illustrates a trade-off between closed-loop performance and computational tractability over $\alpha$.

\begin{figure}
  \centering
  \includegraphics[width=0.95\linewidth]{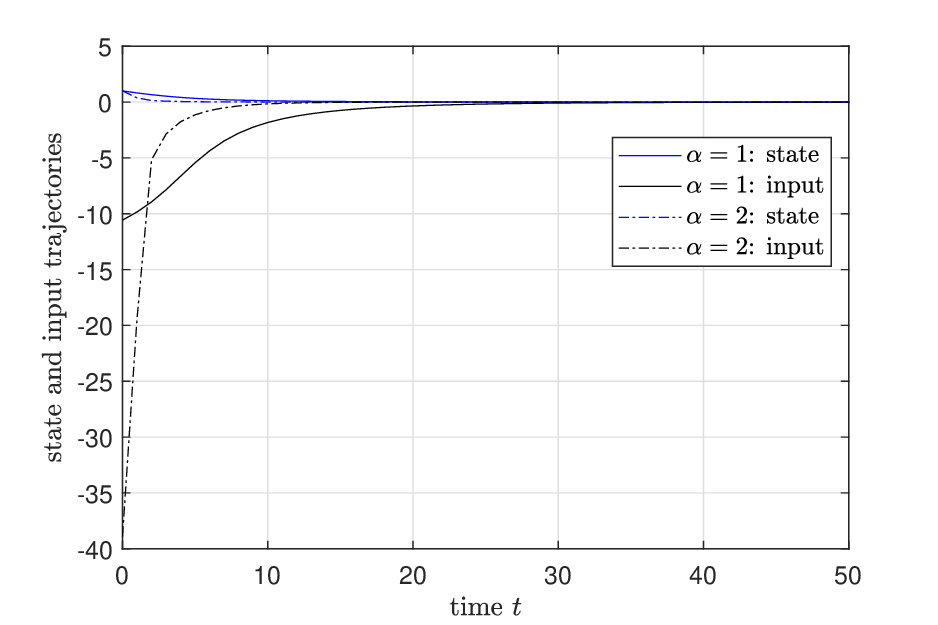}
  \caption{Closed-loop state and input trajectories of the nonlinear system~\eqref{sys_simul} choosing different degree $\alpha$ and adding objective function.}\label{fig:1Dcompare}
\end{figure}



\section{Conclusion}\label{sec:5}
In this paper, we propose a direct data-driven controller design method for continuous-time nonlinear systems based on a Koopman bilinear realization. 
Using sampled data collected from the nonlinear system, we derive a data-driven characterization of the set of consistent bilinear systems.
We then formulate an SOS program to derive a rational state-feedback control law. 
We prove that the nonlinear system is exponentially stabilized at the origin by a sampled controller designed for the bilinear system. 
Furthermore, numerical examples demonstrate the effectiveness of the proposed approach on a nonlinear system.





\bibliographystyle{IEEEtran} 
\bibliography{references} 

\end{document}